\setlist{nosep}
\newtheorem{theorem}{Theorem}
\newtheorem{proposition}[theorem]{Proposition}
\newtheorem{lemma}[theorem]{Lemma}
\newtheorem{corollary}[theorem]{Corollary}
\theoremstyle{remark}
\newtheorem{remark}[theorem]{Remark}
\title{Efficient Defection: Overage-Proportional Rationing Attains the Cooperative Frontier}
\author{Florian Lengyel\,\orcidlink{0000-0003-1210-2058} \\
        \href{mailto:florian.lengyel@cuny.edu}{florian.lengyel@cuny.edu}}
\date{September 9, 2025}
\begin{document}
\maketitle

\begin{abstract}
We study a noncooperative $n$-player game of \emph{slack allocation}  in which each player $j$ has entitlement $L_j>0$ and chooses a claim $C_j\ge0$. Let $v_j=(C_j-L_j)_+$ (overage) and $s_j=(L_j-C_j)_+$ (slack); set $X=\sum_j v_j$ and $I=\sum_j s_j$. At the end of the period an overage-proportional clearing rule allocates cooperative surplus $I$ to defectors in proportion to $v_j$; cooperators receive $C_j$. We show: (i) the selfish outcome reproduces the cooperative payoff vector $(L_1,\dots,L_n)$; (ii) with bounded actions, defection is a weakly dominant strategy; (iii) within the $\alpha$-power family, the linear rule ($\alpha=1$) is the unique boundary-continuous member; and (iv) the dominant-strategy outcome is Strong Nash under transferable utility and hence coalition-proof \citep{BernheimPelegWhinston1987}. We give a policy interpretation for carbon rationing with a penalty collar.
\end{abstract}

\section{Introduction}

We study an $n$-player noncooperative ``slack allocation'' game. Each agent $j$ holds an entitlement $L_j>0$ and chooses a claim $C_j\ge 0$. Let the overage and slack be $v_j=(C_j-L_j)_+$ and $s_j=(L_j-C_j)_+$, with aggregates $X=\sum_j v_j$ and $I=\sum_j s_j$.%
\footnote{$X:=\sum_j v_j$ denotes total overage; when comparing to classic bankruptcy rules we write $C_{\mathrm{tot}}:=\sum_j C_j$ for total claims.}
At period end, a clearing rule allocates the cooperative surplus $I$ to defectors proportionally to their overage; cooperators receive their claims. The rule is budget balanced when scarcity binds ($X\ge I$) and treats cooperators as ``no-sucker-loss'': if $C_j\le L_j$ then $\pi_j=C_j$ regardless of others.

Our main result is that this proportional slack clearing implements the cooperative frontier in dominant strategies (under bounded actions): each player’s payoff equals their entitlement in equilibrium, even though the behavior is self-regarding defection. We show the dominant-strategy profile is robust to coalition deviations under transferable utility (coalition-proof in the sense of \citet{BernheimPelegWhinston1987}). We also characterize proportionality within a natural $\alpha$-power family: continuity at the $X=I$ boundary uniquely selects the linear rule $\alpha=1$ (Theorem~\ref{thm:uniq}).

We assume credible end-of-period enforcement of the clearing rule and observable claims/emissions. Dominance requires bounded actions $C_j\in[0,M]$; without bounds, best replies may exist only in the limit (Appendix~C).  Coalition-proofness is stated at the dominant-strategy profile under transferable utility.

\paragraph{Contributions.}
(i) \emph{Implementation by efficient defection.} With bounded actions, the max-claim action is a weakly dominant strategy; the induced outcome reproduces the cooperative payoff vector $(L_j)_j$ and is budget balanced when $X\ge I$. (ii) \emph{Robustness to collusion.} At the dominant-strategy profile, no coalition can Pareto-improve under TU; the profile is Strong Nash and hence Coalition-Proof \citep{BernheimPelegWhinston1987}. (iii) \emph{Characterization.} We consider a generalized $\alpha$-power family where the surplus $I$ is allocated proportionally to the $\alpha$-power of individual overages (i.e., $(v_j)^\alpha$). We show that the linear rule ($\alpha=1$) is the unique member of this family that is continuous at the boundary $X=I$. (iv) \emph{Policy reading.} As an end-of-period clearing mechanism with a penalty band, the design is compatible with forward trading and eliminates “wait-and-emit” arbitrage (Appendix D).

\paragraph{Relation to existing work.}
The paper intersects three literatures. First, in the \emph{claims/rationing} tradition (bankruptcy and uniform rationing), proportional rules are classically justified by axioms such as anonymity, consistency, and resource monotonicity \citep[see, e.g.,][]{Thomson2015,Thomson2003,ONeill1982,AumannMaschler1985,Moulin2000}. Our mechanism is noncooperative, budget balanced under scarcity, treats cooperators lexicographically (no-sucker-loss), and yields a new characterization via boundary continuity. Second, in \emph{congestion/CPR} and network allocation, proportional sharing appears via prices and progressive filling \citep[e.g.,][]{Kelly1997,LowLapsley1999}, but agent payoffs there are typically price-mediated and not dominance-implementable. Our rule is price-free, direct, and dominance-implementable under bounds. Third, on \emph{coalition-proofness}, we work within the \citet{BernheimPelegWhinston1987} framework and show the DS outcome is Strong Nash under TU because coalition surplus “leaks” to nonmembers via proportional coverage.

\subsection*{Related literature}

\emph{Claims, bankruptcy, and rationing.} Classical bankruptcy/claims problems allocate a fixed estate to claimants under axioms such as anonymity, consistency, and resource monotonicity; proportional and related rules are characterized in this tradition \citep{ONeill1982,AumannMaschler1985,Moulin2000,Thomson2003,Thomson2015}. Our setting differs: actions are strategic, cooperators are guaranteed their claims (no-sucker-loss), and budget balance holds only when scarcity binds; within this design, boundary continuity selects proportionality.

\emph{Congestion/CPR and networks.} Proportional sharing appears in congestion control and progressive-filling allocations  \citep[e.g.,][]{Kelly1997,LowLapsley1999}; those models rely on prices and potential-game structures. We instead give a direct, price-free mechanism with dominance under bounds and coalition-proofness at equilibrium.

\emph{Coalition-proofness.} We adopt the coalition-proof Nash framework of \citet{BernheimPelegWhinston1987} and show the dominant-strategy outcome is Strong Nash under TU, hence coalition-proof, because coalition-generated surplus is diluted proportionally to overage, limiting the coalition’s net gain.

\paragraph{Roadmap.}
Section~\ref{sec:mechanism} defines the rule and states the budget identity. Section~\ref{sec:properties} gives the main properties (dominance under bounds, coalition-proofness, boundary characterization). Appendix~A develops the $\alpha$-family and the continuity uniqueness; Appendix~B proves coalition-proofness; Appendix~C provides the bounded-action regularization; Appendix~D gives the policy economics of the penalty band.

\section{Mechanism (slack allocation)}\label{sec:mechanism}

For each player $j$, define
\[
v_j:=(C_j-L_j)_+,\qquad s_j:=(L_j-C_j)_+,\qquad X:=\sum_{m=1}^n v_m,\qquad I:=\sum_{m=1}^n s_m.
\]
Here $(x)_+ := \max\{x,0\}$.

Define the cooperator and defector sets by $S:=\{j:\, C_j\le L_j\}$ and $D:=\{j:\, C_j> L_j\}$.

\emph{Assumption (costless claims).} Settlement payoffs are $\pi_j$; submitting a claim $C_j$ carries no magnitude-dependent cost. (If utility were $U_j=\pi_j-\epsilon C_j$ with $\epsilon>0$, maximal claiming $C_j=M$ would not be weakly dominant.)

\paragraph{Bounded actions.}
When we study dominant-strategy behavior we restrict claims to $C_j\in[0,M]$ with a common bound $M> \max_j L_j$;
this is the only place where boundedness is used (see \emph{Appendix C}).

For $\alpha=1$ (linear rule), defectors receive
\begin{equation}\label{eq:clear}
\widehat v_j=\begin{cases}
v_j,& X\le I,\\[3pt]
\dfrac{I}{X}\,v_j,& X>I,
\end{cases}
\qquad\text{and}\qquad
\pi_j=\begin{cases}
C_j,& C_j\le L_j,\\[2pt]
L_j+\widehat v_j,& C_j>L_j.
\end{cases}
\end{equation}
Aggregate payoffs satisfy
\begin{equation}\label{eq:budget}
\sum_j \pi_j \;=\; \sum_j L_j \;-\; \max\{I-X,0\},
\end{equation}
If $X<I$, the gap $I-X$ is unused surplus; if $X\ge I$ (scarcity binds), the rule is budget balanced.
At the cooperative profile $C=L$ we have $(X,I)=(0,0)$ and $\pi_j=L_j$;
at the all-defect profile ($I=0$) we again have $\pi_j=L_j$.

\paragraph{Design trade-off (incentives vs ex-post efficiency).}
The mechanism attains the cooperative frontier in equilibrium by tolerating off-equilibrium inefficiency: when $X<I$, the gap $I-X$ is discarded rather than rebated. This potential waste creates strong ex-ante incentives to claim aggressively; in the dominant-strategy outcome all agents claim $M$, yet realized payoffs equal $L_j$ and total welfare $\sum_j L_j$ is achieved. The result is an incentives-for-efficiency trade-off, not a free lunch.

\paragraph{Normative rationale for continuity at $X=I$.}
Boundary continuity eliminates settlement cliffs under measurement/reporting noise.
With small symmetric noise near $X=I$, any $\alpha\neq 1$ creates a boundary jump that yields a finite expected coverage bias even for small noise,
whereas the linear rule ($\alpha=1$) removes the jump so the bias vanishes with the noise and incentives are locally robust.

\paragraph{Axiomatic contrast to CPR/congestion.}
(i) \emph{No-Sucker-Loss}: if $C_j\!\le\!L_j$ then $\pi_j=C_j$ regardless of others. (ii) \emph{Scarcity-Budget-Balance}: if $X\!\ge\!I$ then $\sum_j\pi_j=\sum_j L_j$. Standard CPR/congestion models typically violate (i). The slack-allocation rule is the unique linear proportional member that satisfies both while remaining boundary-continuous.
This ``No-Sucker-Loss'' guarantee isolates cooperative agents from externalities created by over-claimants, a fairness property uncommon in standard CPR models.

\begin{remark}[Why the No-Sucker-Loss guarantee is atypical]\label{rem:NLS-atypical}
For comparison with classic rules, write $C_{\mathrm{tot}}:=\sum_j C_j$ for total claims (distinct from our $X=\sum_j v_j$, total \emph{overage}). In those models, even “cooperative” agents (those with $C_j\le L_j$) can see their payoffs reduced when the system is under stress.

(i) \textbf{Proportional rule on claims} \citep{ONeill1982,Thomson2015}.
When $C_{\mathrm{tot}}>I$, each agent receives a fraction $\lambda=I/C_{\mathrm{tot}}<1$ of their claim. Thus a cooperative agent with $C_j\le L_j$ receives $\lambda C_j<C_j$, violating NLS.

(ii) \textbf{Constrained equal awards (CEA)} \citep{AumannMaschler1985,Thomson2015}.
Awards are $a_j=\min\{C_j,\lambda\}$ with $\lambda$ chosen to exhaust the estate. With $C=(1,100,100)$ and $I=2$, $\lambda=2/3$ and the cooperative agent gets $a_1=2/3<1$, violating NLS.

(iii) \textbf{Network proportional fairness} \citep{Kelly1997,LowLapsley1999}.
Allocations are jointly determined by a coupled optimization; holding $C_j$ fixed, increasing other users’ demands can strictly decrease agent $j$’s allocation, so there is no analogue of NLS.

In contrast, \emph{overage-proportional rationing} applies reductions only to overages $(C_j-L_j)_+$. Any agent with $C_j\le L_j$ receives exactly $C_j$ regardless of others’ claims. These canonical families all fail NLS, indicating that the property is non-generic among standard rationing and congestion models.
\end{remark}

\section{Main properties}\label{sec:properties}
\begin{proposition}[Cooperative frontier reproduced]\label{prop:frontier}
If all defect, then $I=0$, $\widehat v_j=0$, and $\pi_j=L_j$ for all $j$.
If all cooperate, then $X=0$, $\pi_j=C_j$, and at $C=L$ the payoff vector is $(L_j)_j$.
\end{proposition}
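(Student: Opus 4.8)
The plan is to verify both assertions directly from the definitions of $v_j$, $s_j$, $X$, $I$ in Section~\ref{sec:mechanism} together with the clearing rule~\eqref{eq:clear}; no optimization or fixed-point argument is needed, since both profiles are fully explicit and the only moving part is the piecewise structure of~\eqref{eq:clear}.

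First I would handle the all-defect profile. Since $D=\{j:C_j>L_j\}$, ``all defect'' means $C_j>L_j$ for every $j$, so $s_j=(L_j-C_j)_+=0$ for all $j$ and hence $I=\sum_j s_j=0$. Each $v_j=(C_j-L_j)_+=C_j-L_j>0$, so $X=\sum_j v_j>0$; thus the profile sits unambiguously in the scarcity branch $X>I$ of~\eqref{eq:clear} with $I=0$, giving $\widehat v_j=(I/X)\,v_j=0$ for every $j$. As each player is a defector, $\pi_j=L_j+\widehat v_j=L_j$. As a consistency check, the budget identity~\eqref{eq:budget} yields $\sum_j\pi_j=\sum_j L_j-\max\{I-X,0\}=\sum_j L_j$, as it must.

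Next I would treat the all-cooperate profile: ``all cooperate'' means $C_j\le L_j$ for every $j$, so $v_j=(C_j-L_j)_+=0$ and therefore $X=\sum_j v_j=0$. Every player lies in $S$, so the no-sucker-loss branch of~\eqref{eq:clear} applies and $\pi_j=C_j$ for all $j$. Specializing to $C=L$ gives $(X,I)=(0,0)$ and $\pi_j=C_j=L_j$, i.e.\ the payoff vector $(L_j)_j$; again~\eqref{eq:budget} is consistent since $\sum_j\pi_j=\sum_j L_j-\max\{0-0,0\}=\sum_j L_j$.

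There is essentially no obstacle here. The one point deserving a moment's care is confirming that the all-defect profile falls unambiguously under the $X>I$ case of the piecewise rule, which it does because strict defection forces $X>0=I$, so the branch $\widehat v_j=(I/X)v_j$ is the one that applies and collapses to $0$. Alternatively, one may observe that the boundary $X=I$ can arise among all-defect profiles only in the degenerate sub-case $X=I=0$, where $\widehat v_j=v_j=0$ as well, so the conclusion is insensitive to the convention chosen at the boundary.
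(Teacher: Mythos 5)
Your proof is correct and follows exactly the argument the paper leaves implicit (the proposition has no displayed proof; it is justified by the Section~\ref{sec:mechanism} remarks that $C=L$ gives $(X,I)=(0,0)$ and $\pi_j=L_j$, while all-defect gives $I=0$ and $\pi_j=L_j$). One small imprecision: your closing aside about the ``degenerate sub-case $X=I=0$'' cannot actually occur among all-defect profiles, since strict defection forces $X>0$, but this does not affect the argument.
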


\begin{proposition}[Dominant-strategy defection under bounds]\label{prop:ds}
With $C_j\in[0,M]$ and the linear rule ($\alpha=1$), for any fixed $C_{-j}$ the map $C_j\mapsto \pi_j(C_j,C_{-j})$ is nondecreasing; 
thus $C_j^\star=M$ is a best reply independent of $C_{-j}$. (See \emph{Appendix C}.)
\end{proposition}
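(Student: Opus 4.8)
The plan is to freeze the opponents' profile $C_{-j}$, set the constants $X_{-j}:=\sum_{m\neq j}v_m$ and $I_{-j}:=\sum_{m\neq j}s_m$, and prove that $C_j\mapsto\pi_j(C_j,C_{-j})$ is nondecreasing on $[0,M]$ by a piecewise analysis with breakpoints at $C_j=L_j$ and (when $I_{-j}>X_{-j}$) at the claim level where aggregate overage first meets aggregate slack. Since this argument uses $C_{-j}$ only through $X_{-j}$ and $I_{-j}$, it simultaneously shows $C_j^\star=M$ is a best reply for every $C_{-j}$, i.e.\ weakly dominant.

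First, on $C_j\in[0,L_j]$ we have $v_j=0$, so $X=X_{-j}$ is frozen, $j$ is a cooperator, and $\pi_j=C_j$ is strictly increasing, reaching $L_j$ at $C_j=L_j$. Next, on $C_j\in(L_j,M]$ write $t:=C_j-L_j>0$, so $v_j=t$, $s_j=0$, $X=X_{-j}+t$, $I=I_{-j}$. If $X_{-j}+t\le I_{-j}$ then $\widehat v_j=v_j$ and $\pi_j=L_j+t=C_j$, strictly increasing. If $X_{-j}+t>I_{-j}$ then $\pi_j=L_j+I_{-j}\,t/(X_{-j}+t)$; the map $t\mapsto t/(X_{-j}+t)$ has nonnegative derivative $X_{-j}/(X_{-j}+t)^2$, so $\pi_j$ is nondecreasing in $t$ here as well (strictly when $X_{-j}>0$; otherwise constant at $L_j+I_{-j}$). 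Hence $\pi_j$ is nondecreasing on each of the at most three pieces.

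What remains — and the only place the linear rule enters — is continuity at the seams. At $C_j=L_j$ the left value is $L_j$ and both right-hand expressions tend to $L_j$ as $t\to0^+$. At the internal breakpoint $t^\star:=I_{-j}-X_{-j}$ (present iff $I_{-j}>X_{-j}$) the $X\le I$ branch gives $\pi_j=L_j+t^\star$, while the $X>I$ branch gives $\pi_j=L_j+I_{-j}\cdot t^\star/(X_{-j}+t^\star)=L_j+t^\star$, the same value; this exact cancellation is the boundary continuity that pins down $\alpha=1$. Therefore $\pi_j(\cdot,C_{-j})$ is continuous and piecewise nondecreasing on the compact interval $[0,M]$, so it is nondecreasing, and its maximum is attained at $C_j=M$, independent of $C_{-j}$.

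I expect the seam-gluing at $t^\star$ to be the crux: for $\alpha\neq1$ the two one-sided values disagree, so the payoff can jump down as the claim crosses the scarcity boundary and monotonicity breaks — which is precisely why the dominance statement is confined to $\alpha=1$ (cf.\ Appendix~C). A secondary point worth flagging is that boundedness is genuinely used: if actions were unbounded and $X_{-j}>0$, then $\pi_j\uparrow L_j+I_{-j}$ without ever attaining it, so a best reply would fail to exist and only a limiting statement would survive.
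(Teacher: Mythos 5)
Your proof is correct and essentially identical to the paper's Appendix~C argument: the same change of variable $y=C_j-L_j$, the same three-piece decomposition with breakpoints at $L_j$ and $y^\star=(I_{-j}-X_{-j})_+$, the same continuity check at the seams (where both branches give $L_j+y^\star$), and the same derivative computation $I_{-j}X_{-j}/(X_{-j}+y)^2\ge0$ on the scarcity piece. Your closing remarks on why $\alpha\neq1$ breaks the gluing and why unboundedness destroys existence of a best reply also track the paper's own discussion.
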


\begin{theorem}[Uniqueness of boundary continuity]\label{thm:uniq}
Within the $\alpha$-power family for the slack allocation mechanism, continuity at $X=I$ for all positive overage vectors holds iff $\alpha=1$.
(\emph{Appendix A}.)
\end{theorem}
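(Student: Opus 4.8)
\medskip

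The plan is to write out the $\alpha$-power clearing rule explicitly, evaluate the defector payoff on both sides of the boundary $X=I$, and compare one-sided limits. In the $\alpha$-power family, when scarcity binds ($X>I$) each defector $j$ receives $\widehat v_j^{(\alpha)} = I\cdot (v_j)^\alpha / \sum_{m\in D}(v_m)^\alpha$, while on the slack side ($X\le I$) the rule pays $\widehat v_j = v_j$ in full (there is enough surplus to cover every overage, and the no-sucker-loss / efficiency conventions fix this branch). The payoff $\pi_j = L_j + \widehat v_j$ for a defector is therefore continuous in the profile iff $\widehat v_j^{(\alpha)}$ is continuous as the state crosses $X=I$.

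\medskip

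First I would fix an arbitrary positive overage vector $(v_1,\dots,v_n)$ with all $v_m>0$ (so $D$ is the full set of indices under consideration) and consider scaling it, or more naturally consider a one-parameter family of profiles that approaches the boundary from the scarcity side. The cleanest route: hold the \emph{direction} of the overage vector fixed, i.e.\ write $v_m = t\,w_m$ with $\sum_m w_m = 1$ and $w_m>0$, so $X = t$; let $I$ be fixed and send $t \downarrow I$. From the scarcity side the limit of $\widehat v_j^{(\alpha)}$ is
\[
\lim_{t\downarrow I} \frac{I\,(t w_j)^\alpha}{\sum_m (t w_m)^\alpha}
\;=\; \frac{I\, w_j^\alpha}{\sum_m w_m^\alpha},
\]
since the factor $t^\alpha$ cancels between numerator and denominator. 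From the slack side the value at $X=I$ (equivalently $t=I$) is $\widehat v_j = v_j = I w_j$. Continuity at the boundary for this family therefore requires
\[
\frac{w_j^\alpha}{\sum_m w_m^\alpha} \;=\; w_j \qquad\text{for every } j,
\]
i.e.\ the normalized $\alpha$-power weights must equal the normalized linear weights for \emph{all} positive weight vectors on the simplex.

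\medskip

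The remaining step is to show that this identity forces $\alpha=1$. I would argue it directly: the condition says the map $w\mapsto w^\alpha$ is, up to the common normalizing constant $\sum_m w_m^\alpha / 1$, proportional to the identity on every positive vector; equivalently $w_j^\alpha = \big(\sum_m w_m^\alpha\big)\, w_j$, so the ratio $w_j^{\alpha-1}$ is independent of $j$ for every admissible $w$. Picking any $w$ with two distinct coordinates (e.g.\ $w_1\ne w_2$) gives $w_1^{\alpha-1} = w_2^{\alpha-1}$, which forces $\alpha-1=0$. Conversely, $\alpha=1$ makes the scarcity-side formula $\widehat v_j = I v_j / X$ agree with $v_j$ exactly at $X=I$, so the rule is boundary-continuous; this direction is immediate from \eqref{eq:clear}. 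I expect the only subtlety — and the step to state carefully rather than the step that is hard — is the quantifier: continuity is demanded "for all positive overage vectors," so a single offending $w$ suffices to kill $\alpha\ne1$, and one should note that approaching the boundary along a fixed direction is legitimate because the boundary set $\{X=I\}$ is reached by such rays. No genuine obstacle is anticipated; the cancellation of $t^\alpha$ is the crux and it is a one-line observation.
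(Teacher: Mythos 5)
Your proof is correct and follows essentially the same route as the paper's: compute the one-sided limits of $\widehat v_j^\alpha$ at the boundary, reduce continuity for all positive overage vectors to the identity $v_j^\alpha/\sum_\ell v_\ell^\alpha = v_j/\sum_\ell v_\ell$, and observe this forces $\alpha=1$. The ray parametrization $v=tw$ is a presentational variant (the scarcity-side formula is scale-invariant, so the $t^\alpha$ cancellation is automatic), and your explicit two-coordinate argument for why $w_j^{\alpha-1}$ must be constant fills in a step the paper leaves implicit.
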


\begin{theorem}[DS is Strong Nash under TU, therefore CPNE]\label{thm:SNE}
At the dominant-strategy outcome (all defect), under transferable utility within coalitions, no coalition $K$ can achieve a strict Pareto improvement by deviating; hence the profile is a Strong Nash equilibrium. By \citet{BernheimPelegWhinston1987}, Strong Nash implies Coalition-Proof Nash Equilibrium.
\end{theorem}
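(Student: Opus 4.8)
The key structural fact is that at the all-defect profile the aggregate payoff equals $\sum_j L_j$ (by the budget identity \eqref{eq:budget}, since $I=0$ there and $\max\{I-X,0\}=0$), and each individual payoff is exactly $L_j$. So to rule out a coalitional Pareto improvement it suffices to show that \emph{no} deviation by a coalition $K$ can raise the total payoff $\sum_{j\in K}\pi_j$ received by members of $K$ above $\sum_{j\in K}L_j$ — because under TU a strict Pareto improvement for $K$ requires the coalition's aggregate payoff to strictly exceed its aggregate status-quo payoff $\sum_{j\in K}L_j$, while holding each member at least at $L_j$. Thus the whole theorem reduces to a single aggregate inequality, which I would isolate as the main lemma.

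The plan is: (1) Fix a coalition $K$ and an arbitrary joint deviation $C_K$ by its members, with nonmembers held at $C_j=M>L_j$ (all nonmembers defect, as in the status-quo profile). Write the resulting quantities $X,I,\widehat v_j$. (2) Bound $\sum_{j\in K}\pi_j$ from above. Split $K$ into its cooperators $K\cap S$ and defectors $K\cap D$: cooperators in $K$ contribute $C_j\le L_j$ by No-Sucker-Loss, which is already $\le L_j$; defectors in $K$ contribute $L_j+\widehat v_j$. So $\sum_{j\in K}\pi_j \le \sum_{j\in K}L_j + \sum_{j\in K\cap D}\widehat v_j$. (3) The heart is to show $\sum_{j\in K\cap D}\widehat v_j$ cannot exceed the slack $\sum_{j\in K}s_j = \sum_{j\in K\cap S}s_j$ that $K$ itself generated — i.e., the surplus a coalition can recapture is at most the surplus it contributes, because the clearing rule dilutes $I$ across \emph{all} defectors (including the nonmembers, who are defecting with large overage $M-L_j>0$), so surplus leaks out of $K$. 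Concretely, when $X>I$ we have $\widehat v_j=(I/X)v_j$ and $\sum_{j\in K\cap D}\widehat v_j = (I/X)\sum_{j\in K\cap D}v_j \le (I/X)\,X_K$ where $X_K=\sum_{j\in K}v_j$; since nonmembers defect we have $X\ge X_K + \sum_{j\notin K}(M-L_j)$, and $I = \sum_{j\in K\cap S}s_j$ (nonmembers have zero slack), so $(I/X)X_K \le I = \sum_{j\in K}s_j$ provided $X_K\le X$, which is immediate. When $X\le I$ the rule gives $\widehat v_j=v_j$, so $\sum_{j\in K\cap D}\widehat v_j = X_K \le X \le I = \sum_{j\in K}s_j$ as well. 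Either way $\sum_{j\in K}\pi_j \le \sum_{j\in K}L_j - \sum_{j\in K\cap S}s_j + \sum_{j\in K\cap S}s_j$... more carefully, combining (2) and (3): using $C_j = L_j - s_j$ for $j\in K\cap S$, $\sum_{j\in K}\pi_j = \sum_{j\in K\cap S}(L_j-s_j) + \sum_{j\in K\cap D}(L_j+\widehat v_j) \le \sum_{j\in K}L_j - \sum_{j\in K\cap S}s_j + \sum_{j\in K}s_j = \sum_{j\in K}L_j$. (4) Conclude: the coalition's aggregate payoff never strictly exceeds $\sum_{j\in K}L_j$, so under TU no strict Pareto improvement exists; hence all-defect is Strong Nash, and by \citet{BernheimPelegWhinston1987} it is Coalition-Proof.

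I expect the main obstacle to be step~(3) in the borderline/degenerate configurations: one must handle the case where the coalition deviation drives $X$ close to or below $I$ (e.g., $K$ consists entirely of former defectors who now under-claim), and the case where $K$ contains all players so there are no nonmembers to absorb leakage — there the coalition simply faces the same budget identity \eqref{eq:budget} directly, giving $\sum_{j\in K}\pi_j = \sum_j L_j - \max\{I-X,0\} \le \sum_j L_j$. I would also note the (mild) subtlety that ``Strong Nash $\Rightarrow$ CPNE'' is a standard implication from \citet{BernheimPelegWhinston1987} and requires no further argument beyond citing it, and that the TU assumption is what lets me pass from the componentwise Pareto criterion to the single scalar inequality on coalition aggregate payoff.
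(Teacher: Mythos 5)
Your proposal is correct and follows essentially the same route as the paper's proof (Theorem~\ref{thm:SNE} together with Appendix~B): you use the same decomposition $\sum_{j\in K}\pi_j=\sum_{j\in K}L_j-\sum_{j\in K\cap S}s_j+\sum_{j\in K\cap D}\widehat v_j$, the same observation that a defecting complement forces $I=I_K$ (so surplus allocated to defectors cannot exceed what $K$ itself contributed), and the same TU reduction to a scalar aggregate inequality before invoking Bernheim--Peleg--Whinston. The only cosmetic difference is that you unify the two scarcity regimes with the single bound $\sum_{j\in K\cap D}\widehat v_j\le I$ rather than treating $X\le I$ and $X>I$ as separate cases, which is marginally tidier but the same idea.
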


\begin{proof}
Let $C^{\mathrm{DS}}=(M,\dots,M)$ with $C_j\in[0,M]$ (Prop.~\ref{prop:ds}).
Then $I=0$ and by Section~2 we have $\pi_j(C^{\mathrm{DS}})=L_j$ for all $j$, hence for any coalition $K\subseteq N$,
\[
\sum_{i\in K}\pi_i(C^{\mathrm{DS}})=\sum_{i\in K}L_i.
\]

Fix any coalition $K$ and any deviation $C'_K$.
The post-deviation profile is $C'=(C'_K,\,C^{\mathrm{DS}}_{-K})$.
Since the complement $-K$ continues to defect (plays $C^{\mathrm{DS}}_{-K}$), it generates no slack; therefore at $C'$ we have $I=I_K$.

By Appendix~B (the Case~2 argument applied with a defecting complement), whenever $I=I_K$ we have the coalition payoff bound
\[
\sum_{i\in K}\pi_i(C') \;\le\; \sum_{i\in K}L_i \;=\; \sum_{i\in K}\pi_i(C^{\mathrm{DS}}).
\]
Under transferable utility, a coalition deviation can make all its members weakly better and at least one strictly better only if its total payoff strictly increases.
The bound shows this is impossible from $C^{\mathrm{DS}}$. Hence $C^{\mathrm{DS}}$ is a Strong Nash equilibrium under TU.
Since every Strong Nash equilibrium is coalition-proof \citep{BernheimPelegWhinston1987}, $C^{\mathrm{DS}}$ is also a CPNE.
\end{proof}

\begin{remark}[Equilibrium multiplicity under weak dominance]
The maximal-claiming profile $C^{\mathrm{DS}}=(M,\dots,M)$ is a dominant-strategy equilibrium. Because dominance is weak, other Nash equilibria exist. In particular, the cooperative profile $C=L$ is a Nash equilibrium: for any $j$ and any $C_j'\ge L_j$, the induced $I'=0$ yields $\pi_j(C_j',C_{-j})=L_j=\pi_j(L_j,C_{-j})$, so unilateral deviations are not profitable. Our welfare and coalition-proofness results are stated for the dominant-strategy outcome.
\end{remark}

\begin{remark}[Transferable utility (TU)]
We use TU in the standard sense: coalition members can make budget-balanced side-payments among themselves, so a deviation is evaluated by the coalition’s total payoff. Formally, for $K\subseteq N$ a deviation from $C$ to $C'$ is feasible under TU iff there exist transfers $(t_i)_{i\in K}$ with $\sum_{i\in K} t_i=0$ such that $\pi_i(C')+t_i\ge \pi_i(C)$ for all $i\in K$, with strict inequality for at least one member.
\end{remark}

\section{Policy Implementation: Cap-and-Share with overage clearing}

Interpret $L_j$ as allowances under cap-and-share (we index by $t$ only in this section and Appendix~D), and take the linear rule ($\alpha=1$).
In applications, the bound $M$ can represent a physical capacity, a regulatory limit, or a credit constraint; none of the results use more than $M>\max_j L_j$ (Appendix~C).

A forward market clears expected buy/sell orders; at period end, realized emissions induce $(v_j,s_j)$ and clearing \eqref{eq:clear}.
Residual overage $(v_j-\widehat v_j)_+$ is priced by a penalty collar $\kappa_t\in[\underline\kappa,\overline\kappa]$ modulated by an endogenous scarcity factor $\Lambda_t$ (defined below).
Appendix~D (Prop.~\ref{prop:D-expected-mc}) shows that the mechanism is compatible with forward trading without perverse incentives.

\paragraph{Penalty and scarcity.}
For period $t$, define the scarcity factor by
\[
\Lambda_t \;:=\;
\begin{cases}
0,& X^t=0,\\[3pt]
\max\{0,(X^t-I^t)/X^t\},& X^t>0,
\end{cases}
\qquad \Lambda_t\in[0,1].
\]
The per-unit penalty $\kappa_t\in[\underline\kappa,\overline\kappa]$ is regulator-set (exogenous), while $\Lambda_t$ is endogenous (determined by realized $(X^t,I^t)$).
If $X^t\le I^t$, residual overage is fully covered (zero penalty); if $X^t>I^t$, a marginal unit of overage faces at least $\kappa_t\,\Lambda_t$ in penalty at clearing (Appendix D).

\paragraph{Risk \& Governance.}
The authority should emphasize robustness and auditability rather than discretion: (i) stress-test reporting and clearing against strategic misreporting and timing manipulation; (ii) publish, ex ante, the collar calibration and adjustment protocol (data sources and decision rules); (iii) monitor realized $(X^t,I^t,\Lambda_t)$ against stated tolerances with review triggers for threshold breaches; and (iv) reserve a narrowly circumscribed emergency suspension rule that preserves budget balance and does not create profitable anticipatory deviations.

\appendix

\section*{Appendix A. Overage-power family and boundary continuity}

We generalize the slack-allocation mechanism by introducing an exponent $\alpha>0$ on overage shares.
Players choose claims $C_j\ge 0$ against entitlements $L_j>0$ and we set, as in the main text,
\[
v_j=(C_j-L_j)_+,\qquad s_j=(L_j-C_j)_+,\qquad X=\sum_m v_m,\qquad I=\sum_m s_m.
\]
Given a profile $C$, the clearing rule with exponent $\alpha$ allocates cooperative surplus $I$ to defectors ($v_j>0$) via

\begin{equation}\label{eq:A-clearing}
\widehat v_j^\alpha(C)=
\begin{cases}
v_j, & X\le I,\\[6pt]
\dfrac{I\,v_j^\alpha}{\sum_{m:v_m>0} v_m^\alpha}, & X>I,
\end{cases}
\end{equation}

and cooperators ($C_j\le L_j$) receive their claim while defectors receive entitlement plus covered overage:
\[
\pi_j^\alpha(C)=
\begin{cases}
C_j, & C_j\le L_j,\\[3pt]
L_j+\widehat v_j^\alpha(C), & C_j>L_j.
\end{cases}
\]

\paragraph{Boundary convention at $X=I$.}
At the boundary we adopt the $X<I$ branch, i.e., $\widehat v_j^\alpha=v_j$.
This preserves feasibility $\sum_{j:C_j>L_j}\widehat v_j^\alpha=I$ and budget balance.
For $\alpha\neq 1$ the allocation generally differs from the $X>I$ branch, so a discontinuity remains; for $\alpha=1$ the two branches coincide and the rule is continuous.

\paragraph{Aggregate identity.}
Let $\Pi^\alpha(C):=\sum_j \pi_j^\alpha(C)$. Since $\sum_{j:C_j\le L_j} C_j=\sum_j \min\{C_j,L_j\}$ and $\sum_{j:C_j>L_j}\widehat v_j^\alpha=\min\{X,I\}$, we have
\begin{equation}\label{eq:A-budget}
\Pi^\alpha(C)=\sum_j L_j - \max\{I-X,0\}.
\end{equation}
Hence in the scarcity region $X\ge I$ we have $\Pi^\alpha(C)=\sum_j L_j$, and in the slack region $X<I$ we have $\Pi^\alpha(C)=\sum_j L_j-(I-X)$. This coincides with the main-text budget identity \eqref{eq:budget}.

\paragraph{Monotonicity in own claim under bounds.}
Fix $C_{-j}$ and restrict $C_j\in[0,M]$.
Set $y:=(C_j-L_j)_+\in[0,M-L_j]$, $X=X_{-j}+y$, and $I=I_{-j}$; define $S_{-j}:=\sum_{m\ne j} v_m^\alpha$ (note: $S_{-j}\ne X_{-j}^\alpha$ in general).
Then
\[
\pi_j^\alpha(y)=
\begin{cases}
L_j+y, & X_{-j}+y\le I_{-j},\\[8pt]
L_j+\dfrac{I_{-j}\, y^\alpha}{S_{-j}+ y^\alpha}, & X_{-j}+y> I_{-j},
\end{cases}
\]

with the boundary value at $X_{-j}+y=I_{-j}$ set by the $X<I$ branch (boundary convention above).

On $[0,(I_{-j}-X_{-j})_+]$ we have $\pi_j^\alpha(y)=L_j+y$ which is strictly increasing.
On $((I_{-j}-X_{-j})_+,\; M-L_j]$ we have
\[
\phi(y):=L_j+\frac{I_{-j}\, y^\alpha}{S_{-j}+ y^\alpha},
\qquad
\phi'(y)=\frac{\alpha\, I_{-j}\, S_{-j}\, y^{\alpha-1}}{(S_{-j}+ y^\alpha)^2}\;\ge 0,
\]
so $\pi_j^\alpha$ is nondecreasing there. For $\alpha=1$, the branches coincide at the boundary, hence $\pi_j^1(\cdot,C_{-j})$ is nondecreasing on $[0,M]$.
In particular a best reply is attained at $C_j^\star=M$ for $\alpha=1$.

\begin{theorem}[Uniqueness of boundary continuity]\label{thm:A-uniq}
For \eqref{eq:A-clearing}, the payoff map $C\mapsto \pi^\alpha(C)$ is continuous at profiles with $X=I$ for all positive overage vectors if and only if $\alpha=1$.
\end{theorem}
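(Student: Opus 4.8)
The plan is to isolate the single place where $C\mapsto\pi^\alpha(C)$ can jump at a profile on $\{X=I\}$ — namely the switch between the two branches of \eqref{eq:A-clearing} — and to reduce continuity across that switch to one algebraic identity in the overage vector that is forced to hold when $\alpha=1$ and fails otherwise.

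First I would fix a profile $C^0$ with $X^0:=X(C^0)=I^0:=I(C^0)>0$, write $D^0=\{j:C^0_j>L_j\}$ for its (nonempty) defector set, and set $\tau:=X^0=\sum_{m\in D^0}v^0_m$. I would record the routine facts: $C\mapsto(v_j,s_j,X,I)$ is continuous; $D(C)\supseteq D^0$ on a neighborhood of $C^0$; the cooperator branch $\pi_j=C_j$ is continuous; and any player with $v^0_j=0$ has $0\le\widehat v_j^\alpha(C)\le v_j(C)\to 0$ from either side of $\{X=I\}$, so $\pi_j\to L_j$ there regardless of $\alpha$. Attention thus reduces to $j\in D^0$. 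On the closed side $\{X\le I\}$ (which by the stated convention contains the boundary) $\widehat v_j^\alpha=v_j\to v^0_j$, so the restriction of $\pi^\alpha$ to that side is already continuous at $C^0$. On the open side $\{X>I\}$ one has $\widehat v_j^\alpha=I\,v_j^\alpha/\sum_{m:v_m>0}v_m^\alpha$; taking $C\to C^0$ through that side gives $I\to\tau$, $v_m\to v^0_m$ for $m\in D^0$, while any player $m$ with $v^0_m=0$ that happens to lie in $D(C)$ contributes $v_m^\alpha\to 0$ (this is the one place $\alpha>0$ is used), so the one-sided limit is $\tau\,(v^0_j)^\alpha/\sum_{m\in D^0}(v^0_m)^\alpha$. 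Hence $\pi^\alpha$ is continuous at $C^0$ if and only if, for every $j\in D^0$,
\begin{equation}\label{eq:contid}
\frac{(v^0_j)^\alpha}{\sum_{m\in D^0}(v^0_m)^\alpha}\;=\;\frac{v^0_j}{\sum_{m\in D^0}v^0_m},
\end{equation}
where $\tau=\sum_{m\in D^0}v^0_m$ has been used on the right.

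For sufficiency: when $\alpha=1$, \eqref{eq:contid} is a tautology at every boundary profile, so $\pi^1$ is continuous on $\{X=I\}$; equivalently $\widehat v_j^1=\min\{1,I/X\}\,v_j$ is a continuous function on $\{X>0\}$. For necessity, suppose $\alpha\neq 1$. Since $t\mapsto t^\alpha$ is not affine on $(0,\infty)$, for $a\neq b$ in $(0,\infty)$ we have $a^{\alpha-1}\neq b^{\alpha-1}$, hence $a^\alpha b\neq a\,b^\alpha$, hence $a^\alpha/(a^\alpha+b^\alpha)\neq a/(a+b)$; thus \eqref{eq:contid} fails at any boundary profile possessing two distinct positive defector overages. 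I would exhibit one explicitly (this needs $n\ge 3$, since $X=I>0$ with two strict over-claimants requires a strict cooperator to supply the slack): with $L\equiv 1$ take $C^0=(1+a,\,1+b,\,1-a-b,\,1,\dots,1)$ for $0<a<b$, $a+b<1$, so $v^0=(a,b,0,\dots,0)$, $s^0=(0,0,a+b,0,\dots,0)$, and $X^0=I^0=a+b>0$; then \eqref{eq:contid} for $j=1$ reads $a^\alpha/(a^\alpha+b^\alpha)=a/(a+b)$, which is false, so $\pi^\alpha$ is discontinuous at $C^0$. (For $n\le 2$ every positive-overage boundary profile has a single defector, where \eqref{eq:contid} holds for all $\alpha$; the characterization is understood for $n\ge 3$, equivalently for overage vectors with at least two positive coordinates.)

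The only genuinely delicate step is the one-sided limit on $\{X>I\}$: one must check that a player sitting exactly at $C^0_j=L_j$ cannot spoil the limit of the normalizer $\sum_{m:v_m>0}v_m^\alpha$, which is precisely where $\alpha>0$ is needed (a vanishing overage contributes a vanishing $\alpha$-power). Everything else — continuity of $(\cdot)_+$, upward local constancy of the defector set, and the single line of algebra behind \eqref{eq:contid} — is bookkeeping.
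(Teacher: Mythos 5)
Your proof follows the paper's approach exactly: compare one-sided limits of $\widehat v_j^\alpha$ across $\{X=I\}$ and reduce boundary continuity to the identity $v_j^\alpha/\sum_m v_m^\alpha = v_j/\sum_m v_m$, which holds for all positive overage vectors iff $\alpha=1$; you merely make explicit the topological bookkeeping (local stability of the defector set, vanishing contributions of marginal defectors), supply a concrete discontinuity witness, and add the $n\ge 3$ caveat the paper elides. One small wrinkle: your ``routine fact'' $0\le\widehat v_j^\alpha(C)\le v_j(C)$ is false on the $X>I$ side for $\alpha<1$ and small $v_j$ (e.g.\ $v=(0.01,10)$, $I=5$, $\alpha=1/2$ gives $\widehat v_1^\alpha\approx0.15>v_1$), but your subsequent, correct computation $\widehat v_j^\alpha=I\,v_j^\alpha/\sum_m v_m^\alpha\to 0$ as $v_j\to 0$ (using $\alpha>0$ and a normalizer bounded away from $0$) is what the argument actually rests on, so the conclusion stands.
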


\begin{proof}
Continuity away from $X=I$ is immediate.
At $X=I$ with positive $v$, approaching from $X<I$ gives $\widehat v_j^\alpha=v_j$ while from $X>I$ gives $\widehat v_j^\alpha=I\,v_j^\alpha / \sum_\ell v_\ell^\alpha$.
Equality for all positive $v$ forces
\(
\frac{v_j^\alpha}{\sum_\ell v_\ell^\alpha}=\frac{v_j}{\sum_\ell v_\ell}
\),
which holds iff $\alpha=1$; conversely, for $\alpha=1$ the branches coincide.
The boundary convention does not affect this characterization.
\end{proof}

\begin{remark}[Uniqueness of the linear rule at the boundary]
On the scarcity side $X>I$, consider the $\alpha$-weighted alternative that rations overages by weights proportional to $v_j^{\alpha}$:
\[
\pi_j^{(\alpha)} \;=\; L_j \;+\; I\,\frac{v_j^{\alpha}}{\sum_i v_i^{\alpha}}.
\]
On the slack side $X<I$, payoffs are $\pi_j=L_j+v_j$. At the boundary $X=I$ the one-sided limits match for all overage vectors $v$ if and only if $\alpha=1$. For any $\alpha\neq1$, equality fails except on the measure-zero set where all positive $v_j$ are equal, so a jump occurs at $X=I$.
\end{remark}

\section*{Appendix B. Coalition-proofness at the DS outcome}

We prove coalition-proofness for the linear rule ($\alpha=1$) under slack allocation.

\begin{proof}
Let $K$ be any coalition and assume the complement $-K$ \emph{defects}.
We show that the coalition's aggregate payoff at any deviation $C_K$ cannot exceed $\sum_{i\in K} L_i$.

Let $S:=\{\,i: C_i\le L_i\,\}$ (cooperators) and $D:=\{\,i: C_i> L_i\,\}$ (defectors). Write $X_K=\sum_{j\in K} v_j$, $X_{-K}=\sum_{j\notin K} v_j$, $X=X_K+X_{-K}$, and $I_K=\sum_{i\in K} s_i$. Since $-K$ defects, their slack is $0$, so $I=I_K$.

The coalition's aggregate payoff is derived as follows. Note that for $i\in S$, $C_i = L_i - s_i$.
\begin{align*}
\sum_{i\in K}\pi_i
&=\sum_{i\in K\cap S} C_i \;+\; \sum_{j\in K\cap D} \big( L_j + \widehat v_j\big)\\
&=\sum_{i\in K\cap S} (L_i - s_i) \;+\; \sum_{j\in K\cap D} L_j \;+\; \sum_{j\in K\cap D} \widehat v_j \\
&=\sum_{i\in K} L_i \;-\; \sum_{i\in K\cap S} s_i \;+\; \sum_{j\in K\cap D} \widehat v_j.
\end{align*}
Since $s_i=0$ for $i\in D$, we have $\sum_{i\in K\cap S} s_i = \sum_{i\in K} s_i = I_K$. Thus,
\begin{equation}\label{eq:B-coalition-payoff}
\sum_{i\in K}\pi_i = \sum_{i\in K} L_i \;-\; I_K \;+\; \sum_{j\in K\cap D} \widehat v_j.
\end{equation}
We bound the last term in the two regions.

\emph{Case 1: $X\le I$.}
All overage is covered, so $\sum_{j\in K\cap D}\widehat v_j = X_K$.
Then by \eqref{eq:B-coalition-payoff},
\(
\sum_{i\in K}\pi_i = \sum_{i\in K} L_i + (X_K - I_K).
\)
But $X\le I$ and $I=I_K$ imply $X_K \le I_K - X_{-K}\le I_K$, hence $X_K - I_K \le -X_{-K}\le 0$. Therefore $\sum_{i\in K}\pi_i \le \sum_{i\in K} L_i$.

\emph{Case 2: $X> I$.}
Coverage is proportional: $\sum_{j\in K\cap D}\widehat v_j = I \cdot \frac{X_K}{X}= I_K \cdot \frac{X_K}{X}$.
Thus by \eqref{eq:B-coalition-payoff},
\begin{align*}
\sum_{i\in K}\pi_i &= \sum_{i\in K} L_i - I_K + I_K \tfrac{X_K}{X} \\
&= \sum_{i\in K} L_i - I_K \big(1-\tfrac{X_K}{X}\big) \\
&= \sum_{i\in K} L_i - I_K \tfrac{X_{-K}}{X} \;\le\; \sum_{i\in K} L_i. \qedhere
\end{align*}
\end{proof}

\section*{Appendix C. Bounded-action regularization}

Fix $M>0$. Each player $j$ chooses $C_j\in[0,M]$.

\emph{Assumption (large action bound).} Throughout Appendix~C and any results that invoke it, take $M>\max_j L_j$, so that the maximal claim $M$ constitutes defection; if this fails, replace “defection” with “maximal claim” in the statements without altering the analysis.

Let $v_j=(C_j-L_j)_+$, $s_j=(L_j-C_j)_+$, $X=\sum_m v_m$, $I=\sum_m s_m$.
For $\alpha=1$, $\widehat v_j=v_j$ if $X\le I$ and $\widehat v_j=(I/X)v_j$ if $X>I$.
Payoffs are $\pi_j=C_j$ when $C_j\le L_j$, and $\pi_j=L_j+\widehat v_j$ when $C_j>L_j$.
\medskip

\noindent\textbf{Proposition C.} (i) Best replies exist. (ii) For any fixed $C_{-j}$, $\pi_j(C_j,C_{-j})$ is nondecreasing on $[0,M]$, hence a best reply is $C_j^\star=M$.

\begin{proof}
Fix $C_{-j}$. We first establish continuity. On $[0,L_j]$, $\pi_j(C_j,C_{-j})=C_j$.
For the region $C_j\ge L_j$, write $y:=(C_j-L_j)_+\in[0,M-L_j]$, $X=X_{-j}+y$, $I=I_{-j}$. Then
\[
\pi_j(y)=
\begin{cases}
L_j+y, & X_{-j}+y\le I_{-j},\\[4pt]
L_j+\dfrac{I_{-j}y}{X_{-j}+y}, & X_{-j}+y>I_{-j}.
\end{cases}
\]
We check continuity at the switching point $y^\star=(I_{-j}-X_{-j})_+$.
If $y^\star>0$, then $y^\star = I_{-j}-X_{-j}$, which implies $X_{-j}+y^\star = I_{-j}$. The two branches evaluate to:
\begin{align*}
\text{Branch 1 (slack):} &\quad L_j+y^\star, \\
\text{Branch 2 (scarcity):} &\quad L_j + \frac{I_{-j}y^\star}{X_{-j}+y^\star} = L_j + \frac{I_{-j}y^\star}{I_{-j}} = L_j+y^\star.
\end{align*}
If $y^\star=0$ (i.e., $I_{-j}\le X_{-j}$), the payoff function near $y=0$ is continuous, and both expressions evaluate to $L_j$ at $y=0$. Thus the branches agree at $y^\star$, and $\pi_j(\cdot,C_{-j})$ is continuous on $[0,M]$.

(i) Since $\pi_j(\cdot,C_{-j})$ is continuous on the compact interval $[0,M]$, a maximizer exists by the Weierstrass extreme value theorem.

(ii) We establish monotonicity. On $[0,L_j]$, $\pi_j$ is strictly increasing. For $C_j > L_j$: On $[0,y^\star]$, $\pi_j(y)=L_j+y$ is strictly increasing. On $(y^\star,M-L_j]$,
$\frac{\mathrm d}{\mathrm dy}\!\left(L_j+\frac{I_{-j}y}{X_{-j}+y}\right)=\frac{I_{-j}X_{-j}}{(X_{-j}+y)^2}\ge0$,
so $\pi_j$ is nondecreasing. Hence $\pi_j$ is nondecreasing on $[0,M]$ and a best reply is $C_j^\star=M$. \qedhere
\end{proof}

\section*{Appendix D. Penalty-Collar Economics: No Gain from Strategic Over-Emission}

By a penalty collar we mean a regulated interval $[\underline\kappa,\overline\kappa]$ for the per-unit penalty applied to uncovered residual overage at clearing; the realized period-$t$ penalty is $\kappa_t\in[\underline\kappa,\overline\kappa]$.

\paragraph{Setup and notation.}

Fix period $t$. Each entity $i$ has entitlement $L_i^t>0$ and realizes usage (claims) $C_i^t\ge0$.
Define overage $v_i:=(C_i^t-L_i^t)_+$, slack $s_i:=(L_i^t-C_i^t)_+$, aggregate $X^t=\sum_i v_i$, $I^t=\sum_i s_i$.
For any $j$, let $V_{-j}:=\sum_{m\ne j} v_m$ denote the aggregate overage of others, so $X^t=v_j+V_{-j}$.

The scarcity factor $\Lambda_t$ is defined in Section 4.
End-of-period clearing covers defectors' overage proportionally: $\widehat v_i = v_i$ if $X^t\le I^t$, else $\widehat v_i = (I^t/X^t)\,v_i$.
Residual overage is $r_i:= (v_i-\widehat v_i)_+$. Note that $r_i = \Lambda_t v_i$ when $X^t>I^t$ and $0$ otherwise.

Let $p_\tau$ be the forward price at decision time $\tau<t$, conditional on information $\mathcal F_\tau$.

\paragraph{Prices and calibration parameters.}
All expectations below are conditional on the information set defined in the next paragraph.
Let $p_t$ denote the period-$t$ spot price at clearing, and let $\overline p_t$ be a publicly announced upper bound on $p_t$ (e.g., an auction reserve or penalty ceiling). \emph{Assumption (expected scarcity).} There exists $\underline\lambda\in(0,1]$ such that $\mathbb E[\Lambda_t\mid \mathcal F_\tau]\ge \underline\lambda$. This assumption is used only in the collar-calibration corollary below.

\paragraph{Information set.}
For period $t$, let $\mathcal F_\tau$ denote the public information available by decision time $\tau<t$: (i) entitlements $\{L_i^t\}$; (ii) policy parameters $(\underline\kappa,\overline\kappa)$; (iii) forward orders/positions and any other public signals observed by $\tau$ that bear on the period-$t$ aggregates $(X^t,I^t)$ and on the realized penalty $\kappa_t$. Expectations $\mathbb E[\cdot\mid \mathcal F_\tau]$ are conditional on that information. At clearing, $\Lambda_t$ and $\kappa_t$ are $\mathcal F_t$-measurable (but need not be $\mathcal F_\tau$-measurable for $\tau<t$).

\begin{proposition}[Expected marginal cost of waiting]\label{prop:D-expected-mc}
For any defector $j$ with overage $v_j$ at time $t$, the expected unit cost of creating one additional unit by waiting for clearing is at least
\[
\mathbb E\!\left[\kappa_t\,\Lambda_t \,\middle|\, \mathcal F_\tau\right].
\]
\end{proposition}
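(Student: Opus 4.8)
\emph{Proof proposal.}
The plan is to view entity $j$'s post-clearing residual overage as a function of its own overage, show that this function is convex and nondecreasing, and bound its marginal (one-sided) derivative below by the scarcity factor $\Lambda_t$ on every sample path; the expected-cost bound then follows by monotonicity of conditional expectation.

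First I would fix a clearing realization — that is, fix $V_{-j}=\sum_{m\ne j}v_m$ and $I^t$ — and regard $j$'s overage as a variable $v\ge 0$. From the clearing rule of Section~4, $\widehat v_j=v\cdot\min\{1,\,I^t/(v+V_{-j})\}$, so the residual is
\[
r_j(v)=(v-\widehat v_j)_+ = v\,\max\Bigl\{0,\ 1-\tfrac{I^t}{v+V_{-j}}\Bigr\}=v\,\Lambda_t(v).
\]
I would note that $r_j$ is continuous and nonnegative, vanishes on $[0,(I^t-V_{-j})_+]$, and on the scarcity branch $v+V_{-j}>I^t$ equals $v-vI^t/(v+V_{-j})$, whose derivative is increasing in $v$; hence $r_j$ is convex and nondecreasing.

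Next I would compute the marginal residual on the scarcity branch,
\[
r_j'(v)=1-\frac{I^t V_{-j}}{(v+V_{-j})^2}=1-\frac{I^t}{v+V_{-j}}\cdot\frac{V_{-j}}{v+V_{-j}}\ \ge\ 1-\frac{I^t}{v+V_{-j}}=\Lambda_t(v),
\]
where the inequality uses $V_{-j}\le v+V_{-j}$; on the slack branch $r_j'(v)=0=\Lambda_t(v)$, and at the kink the right derivative is $1-V_{-j}/I^t\ge 0=\Lambda_t(v)$. Thus, for every realization, the marginal residual at $j$'s current overage satisfies $r_j'(v_j)\ge\Lambda_t$; equivalently, for a discrete extra unit, convexity gives $r_j(v_j+1)-r_j(v_j)\ge r_j'(v_j)\ge\Lambda_t$. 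The unit cost of ``waiting'' is the penalty $\kappa_t$ levied on this incremental uncovered overage, so pathwise it is at least $\kappa_t\,r_j'(v_j)\ge\kappa_t\Lambda_t$ (using $\kappa_t\ge 0$). Since $\kappa_t$, $\Lambda_t$, $V_{-j}$, and $I^t$ are $\mathcal F_t$-measurable, this inequality holds almost surely; applying $\mathbb E[\cdot\mid\mathcal F_\tau]$ and monotonicity of conditional expectation gives $\mathbb E[\,\kappa_t\,r_j'(v_j)\mid\mathcal F_\tau\,]\ge\mathbb E[\kappa_t\Lambda_t\mid\mathcal F_\tau]$, which is the claim.

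The algebra here is short — the crucial inequality $r_j'\ge\Lambda_t$ is immediate from $V_{-j}\le X^t$, with the intuition that by over-emitting one more unit $j$ recaptures only the fraction $I^t V_{-j}/(X^t)^2<I^t/X^t$ of it from the fixed coverage pool. The part needing care is the modeling step rather than the computation: one must pin down that the relevant ``unit cost of waiting for clearing'' is exactly the penalty on the incremental residual $\kappa_t[r_j(v_j+1)-r_j(v_j)]$ (the marginal-unit version $\kappa_t r_j'(v_j)$ under a continuum reading), evaluate $\Lambda_t$ at $j$'s current overage $v_j$ to match the statement, and keep the measurability bookkeeping straight so that the pathwise bound legitimately passes through $\mathbb E[\cdot\mid\mathcal F_\tau]$. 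The companion calibration result — choosing the collar so $\mathbb E[\kappa_t\Lambda_t\mid\mathcal F_\tau]$ dominates the forward price — then rules out ``wait-and-emit'' arbitrage, but that is a separate corollary.
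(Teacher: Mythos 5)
Your proof is correct and follows essentially the same route as the paper's: fix $(V_{-j},I^t)$, compute the one-sided derivative of the residual $r_j$, show $\partial r_j/\partial v_j \ge \Lambda_t$ pathwise via case analysis (slack branch, scarcity branch, kink), multiply by $\kappa_t$, and pass through conditional expectation. Your direct computation $r_j'(v)=1-I^tV_{-j}/(X^t)^2$ is algebraically identical to the paper's product-rule form $\Lambda_t+v_j\,\partial\Lambda_t/\partial v_j$, and the added observations about convexity and $\mathcal F_t$-measurability are sound but not a departure in method.
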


\begin{proof}
Write $P_j(v_j):=\kappa_t\,r_j(v_j)$ for $j$'s penalty at clearing, where $r_j=(v_j-\widehat v_j)_+$. The expected unit cost of creating one more unit by waiting equals the conditional expectation of the right marginal $\partial P_j/\partial v_j$ holding $(V_{-j},I^t)$ fixed.

If $X^t\le I^t$, then $\Lambda_t=0$ in a neighborhood and $r_j\equiv 0$, so $\frac{\partial r_j}{\partial v_j}=0=\Lambda_t$, hence
$\frac{\partial P_j}{\partial v_j}=\kappa_t\,\frac{\partial r_j}{\partial v_j}\ge \kappa_t\,\Lambda_t$.

If $X^t>I^t$, then $\Lambda_t=\frac{(X^t-I^t)}{X^t}=1-\frac{I^t}{X^t}$ with $X^t=v_j+V_{-j}$. Differentiating w.r.t.\ $v_j$ gives
\[
\frac{\partial \Lambda_t}{\partial v_j}
=\frac{I^t}{(X^t)^2}\;\ge\;0.
\]
Since $r_j=\Lambda_t\,v_j$ in this region,
\[
\frac{\partial r_j}{\partial v_j}
=\Lambda_t + v_j\,\frac{\partial\Lambda_t}{\partial v_j}
\;\ge\;\Lambda_t,
\]
and therefore $\frac{\partial P_j}{\partial v_j}
=\kappa_t\,\frac{\partial r_j}{\partial v_j}
\ge \kappa_t\,\Lambda_t$.

At the kink $X^t=I^t$ the right derivative exists and the same inequality holds by the above cases. Taking conditional expectations yields
\[
\mathbb{E}\!\left[\frac{\partial P_j}{\partial v_j}\,\middle|\,\mathcal F_\tau\right]
\;\ge\;
\mathbb{E}\!\left[\kappa_t\,\Lambda_t\,\middle|\,\mathcal F_\tau\right],
\]
which proves the claim.
\end{proof}

\begin{lemma}[No benefit from inflating overage]\label{lem:D-no-gain}
Fix $v_{-j}$ and $I^t$ with $X^t>I^t$.
With $X^t=v_j+V_{-j}$ and $\widehat v_j=I^t\,v_j/X^t$,
\[
\frac{\mathrm d}{\mathrm d v_j} r_j(v_j)
= 1 - \frac{\mathrm d}{\mathrm d v_j}\Big(\frac{I^t v_j}{X^t}\Big)
= 1 - I^t\frac{X^t - v_j}{(X^t)^2}
\ge 1-\frac{I^t}{X^t}
=\Lambda_t,
\]
holding $I^t, V_{-j}$ fixed. Thus $r_j$ is strictly increasing and the incremental penalty is at least $\kappa_t\,\Lambda_t$ per unit.
\end{lemma}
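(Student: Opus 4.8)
The plan is to verify Lemma~\ref{lem:D-no-gain} by direct differentiation, since it is essentially a one-variable calculus computation once we fix $v_{-j}$ and $I^t$ and restrict to the scarcity region $X^t>I^t$. First I would record the setup: with $V_{-j}=\sum_{m\ne j}v_m$ held constant, $X^t=v_j+V_{-j}$ is an affine function of $v_j$ with $\mathrm dX^t/\mathrm dv_j=1$, and the covered overage is $\widehat v_j(v_j)=I^t v_j/X^t = I^t v_j/(v_j+V_{-j})$. The residual is $r_j(v_j)=v_j-\widehat v_j(v_j)$ (which is genuinely $(v_j-\widehat v_j)_+$ with the positive part inactive here because $0\le\widehat v_j\le v_j$ whenever $I^t\le X^t$), so differentiability is not an issue in the interior of this region.

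Next I would compute $\mathrm d\widehat v_j/\mathrm dv_j$ via the quotient rule: $\frac{\mathrm d}{\mathrm dv_j}\frac{I^t v_j}{v_j+V_{-j}} = \frac{I^t(v_j+V_{-j}) - I^t v_j}{(v_j+V_{-j})^2} = \frac{I^t V_{-j}}{(X^t)^2} = I^t\frac{X^t-v_j}{(X^t)^2}$, which is exactly the middle expression in the displayed chain. Hence $\mathrm dr_j/\mathrm dv_j = 1 - I^t(X^t-v_j)/(X^t)^2$. The key inequality $1 - I^t(X^t-v_j)/(X^t)^2 \ge 1 - I^t/X^t = \Lambda_t$ then reduces to $(X^t-v_j)/(X^t)^2 \le 1/X^t$, i.e. $X^t-v_j\le X^t$, i.e. $v_j\ge 0$, which holds trivially. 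I would also note that since $I^t<X^t$ we have $I^t/X^t<1$, so $\mathrm dr_j/\mathrm dv_j \ge \Lambda_t = 1-I^t/X^t > 0$, giving strict monotonicity of $r_j$; multiplying by the (nonnegative) per-unit penalty $\kappa_t$ yields the final assertion that the incremental penalty is at least $\kappa_t\Lambda_t$ per unit.

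There is essentially no obstacle here — the only thing worth a sentence of care is confirming that the positive-part operator in $r_j=(v_j-\widehat v_j)_+$ is indeed inactive throughout the scarcity region, so that $r_j$ is smooth in $v_j$ and the derivative is the honest two-sided derivative rather than a one-sided object. This follows because $\widehat v_j = (I^t/X^t)v_j$ with $I^t/X^t\in(0,1)$, so $0<\widehat v_j<v_j$ and $v_j-\widehat v_j>0$. One could also remark that this lemma is just the $X^t>I^t$ half of the marginal-cost bound already packaged in Proposition~\ref{prop:D-expected-mc}; stating it separately isolates the deterministic, pointwise mechanism (inflating $v_j$ raises own residual faster than it raises coverage) that drives the "no wait-and-emit arbitrage" conclusion.
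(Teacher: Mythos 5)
Your proposal is correct and follows essentially the same line as the paper: the lemma's displayed chain is itself the computation, and you simply supply the quotient-rule step, the reduction of the inequality to $v_j\ge 0$, and the observation that $\Lambda_t>0$ in the scarcity region gives strict monotonicity. Your remark that the positive-part operator in $r_j=(v_j-\widehat v_j)_+$ is inactive (because $I^t/X^t<1$ implies $0\le\widehat v_j<v_j$) is a genuinely useful clarification that the paper leaves implicit; it justifies treating $r_j$ as smooth and the derivative as two-sided. Note also that the paper's proof of Proposition~\ref{prop:D-expected-mc} reaches the same derivative via the product-rule decomposition $r_j=\Lambda_t v_j$, giving $\partial r_j/\partial v_j=\Lambda_t+v_j\,\partial\Lambda_t/\partial v_j$; your quotient-rule computation is algebraically equivalent and matches the form in which the lemma is actually displayed.
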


\begin{corollary}[Collar calibration kills ``wait-and-emit'' arbitrage]\label{cor:D-collar}
If the authority sets $\kappa_t\ge \overline p_t$ (auction reserve or price cap) and publishes $\mathbb E[\Lambda_t\mid\mathcal F_\tau]\ge \underline\lambda>0$, then
$\mathbb E[\kappa_t\,\Lambda_t\mid \mathcal F_\tau]\ge \overline p_t\,\underline\lambda$.
If $\overline p_t\,\underline\lambda \ge p_\tau$, forward purchase is weakly cheaper in expectation than waiting; strict if $>$.
\end{corollary}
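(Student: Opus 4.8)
The plan is to chain three facts: the pointwise penalty inequality implied by the collar calibration, monotonicity of conditional expectation, and the lower bound on the expected marginal cost of waiting from Proposition~\ref{prop:D-expected-mc}; then to compare that lower bound against the forward price $p_\tau$.

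First I would establish the displayed inequality $\mathbb E[\kappa_t\,\Lambda_t\mid\mathcal F_\tau]\ge\overline p_t\,\underline\lambda$. The calibration choice imposes $\kappa_t\ge\overline p_t$, and since $\Lambda_t\in[0,1]$ is nonnegative this yields the pointwise bound $\kappa_t\,\Lambda_t\ge\overline p_t\,\Lambda_t$. Taking $\mathbb E[\cdot\mid\mathcal F_\tau]$ and using that $\overline p_t$ is announced ex ante (hence $\mathcal F_\tau$-measurable) and nonnegative, I factor it out: $\mathbb E[\kappa_t\,\Lambda_t\mid\mathcal F_\tau]\ge\overline p_t\,\mathbb E[\Lambda_t\mid\mathcal F_\tau]$. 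The expected-scarcity assumption $\mathbb E[\Lambda_t\mid\mathcal F_\tau]\ge\underline\lambda$ then gives $\mathbb E[\kappa_t\,\Lambda_t\mid\mathcal F_\tau]\ge\overline p_t\,\underline\lambda$.

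Next I would invoke Proposition~\ref{prop:D-expected-mc}: the expected per-unit cost of creating one additional unit of overage and settling it at clearing (``waiting'') is at least $\mathbb E[\kappa_t\,\Lambda_t\mid\mathcal F_\tau]$, and therefore at least $\overline p_t\,\underline\lambda$ by the previous step. Purchasing the same unit forward at decision time $\tau$ costs exactly $p_\tau$. Hence, whenever $\overline p_t\,\underline\lambda\ge p_\tau$, the forward cost satisfies $p_\tau\le\overline p_t\,\underline\lambda\le\mathbb E[\partial P_j/\partial v_j\mid\mathcal F_\tau]$, so forward purchase is weakly cheaper in expectation than waiting; if $\overline p_t\,\underline\lambda> p_\tau$ the first inequality is strict, giving the strict conclusion.

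The only point requiring care is the measurability/direction bookkeeping: one must state explicitly that $\overline p_t$ and $\underline\lambda$ are $\mathcal F_\tau$-measurable calibration data (published ex ante), so they pass through $\mathbb E[\cdot\mid\mathcal F_\tau]$, and that $\kappa_t\ge\overline p_t$ is an imposed design constraint rather than something derived. Beyond that, the corollary is a one-line composition of Proposition~\ref{prop:D-expected-mc} with monotonicity of conditional expectation, so I do not anticipate any genuine analytic obstacle.
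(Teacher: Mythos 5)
Your proposal is correct and coincides with the argument the paper intends (the corollary is stated without a displayed proof, but the only natural chain is exactly the one you give: the pointwise bound $\kappa_t\Lambda_t\ge\overline p_t\,\Lambda_t$, factoring the $\mathcal F_\tau$-measurable $\overline p_t$ out of the conditional expectation, applying the expected-scarcity assumption, and composing with Proposition~\ref{prop:D-expected-mc} to compare against $p_\tau$). Your added care about $\mathcal F_\tau$-measurability of the published calibration data is appropriate and matches the paper's setup.
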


\paragraph{Budget balance reminder.}
When $X^t\ge I^t$, the clearing is budget balanced, i.e., $\sum_i \pi_i^t=\sum_i L_i^t$ (cf. Eq. \eqref{eq:budget}; here $\pi_i^t$ denotes the period-$t$ payoff.

\bibliographystyle{unsrtnat}
\bibliography{efficient_defection_v22} 

\begin{thebibliography}{8}
\providecommand{\natexlab}[1]{#1}
\providecommand{\url}[1]{\texttt{#1}}
\expandafter\ifx\csname urlstyle\endcsname\relax
  \providecommand{\doi}[1]{doi: #1}\else
  \providecommand{\doi}{doi: \begingroup \urlstyle{rm}\Url}\fi

\bibitem[Bernheim et~al.(1987)Bernheim, Peleg, and
  Whinston]{BernheimPelegWhinston1987}
B.~D. Bernheim, B.~Peleg, and M.~D. Whinston.
\newblock Coalition-proof nash equilibria {I}: Concepts.
\newblock \emph{Journal of Economic Theory}, 42\penalty0 (1):\penalty0 1--12,
  1987.

\bibitem[Thomson(2015)]{Thomson2015}
W.~Thomson.
\newblock Axiomatic and game-theoretic analysis of rationing problems.
\newblock \emph{Economics and Philosophy}, 31\penalty0 (3):\penalty0 297--330,
  2015.

\bibitem[Thomson(2003)]{Thomson2003}
W.~Thomson.
\newblock Axiomatic and game-theoretic analysis of bankruptcy and taxation
  problems: A survey.
\newblock \emph{Mathematical Social Sciences}, 45\penalty0 (3):\penalty0
  249--297, 2003.
\newblock \doi{10.1016/S0165-4896(02)00070-7}.

\bibitem[O'Neill(1982)]{ONeill1982}
B.~O'Neill.
\newblock A problem of rights arbitration from the {Talmud}.
\newblock \emph{Mathematical Social Sciences}, 2\penalty0 (4):\penalty0
  345--371, 1982.
\newblock \doi{10.1016/0165-4896(82)90029-4}.

\bibitem[Aumann and Maschler(1985)]{AumannMaschler1985}
R.~Aumann and M.~Maschler.
\newblock Game theoretic analysis of a bankruptcy problem from the {Talmud}.
\newblock \emph{Journal of Economic Theory}, 36\penalty0 (2):\penalty0
  195--213, 1985.
\newblock \doi{10.1016/0022-0531(85)90102-4}.

\bibitem[Moulin(2000)]{Moulin2000}
H.~Moulin.
\newblock Priority rules and the logic of rationing.
\newblock \emph{Econometrica}, 68\penalty0 (1):\penalty0 1--23, 2000.

\bibitem[Kelly(1997)]{Kelly1997}
F.~Kelly.
\newblock Charging and rate control for elastic traffic.
\newblock \emph{European Transactions on Telecommunications}, 8\penalty0
  (1):\penalty0 33--37, 1997.
\newblock \doi{10.1002/ett.4460080106}.

\bibitem[Low and Lapsley(1999)]{LowLapsley1999}
S.~H. Low and D.~E. Lapsley.
\newblock Optimization flow control, {I}: Basic algorithm and convergence.
\newblock \emph{{IEEE/ACM} Transactions on Networking}, 7\penalty0
  (6):\penalty0 861--874, 1999.
\newblock \doi{10.1109/90.811451}.

\end{thebibliography}

\end{document}